\documentclass[11pt,twoside]{article}
\usepackage[margin=1in]{geometry}

\usepackage{url,hyperref,xspace}
\usepackage[usenames]{color}
\usepackage{times}
\usepackage{verbatim,makeidx}
\usepackage{algorithm}
\usepackage{algorithmic}
\usepackage{amsmath}
\usepackage{amssymb}
\usepackage{amsthm}
\usepackage{dsfont}
\usepackage{color}
\usepackage{colortbl}
\usepackage{float}
\usepackage{cite}
\usepackage{pbox}
\usepackage{bbm}
\usepackage{authblk}
\usepackage{graphicx}
\usepackage{caption}
\usepackage{subcaption}

\captionsetup{compatibility=false}

\allowdisplaybreaks


\newcounter{ALC@tempcntr}

\theoremstyle{plain}
\newtheorem{theorem}{Theorem}
\newtheorem{proposition}{Proposition}
\newtheorem{definition}{Definition}
\newtheorem{remark}{Remark}

\theoremstyle{definition}
\newtheorem{example}{Example}

\theoremstyle{remark}

\newcommand{\beq}{\begin{eqnarray}}
\newcommand{\eeq}{\end{eqnarray}}

\newcommand{\field}[1]{\mathbb{#1}}

\newcommand{\F}{\field{F}}
\newcommand{\B}{\field{B}}
\newcommand{\Lf}{\field{L}}



\newfont{\bbb}{msbm10 scaled 500}

\newfont{\bb}{msbm10 scaled 1100}


\newcommand{\cv}{{\bf c}}

\newcommand{\fv}{{\bf f}}

\newcommand{\xv}{{\bf x}}
\newcommand{\yv}{{\bf y}}


\newcommand{\Em}{{\bf E}}

\newcommand{\Hm}{{\bf H}}

\newcommand{\Pm}{{\bf P}}


\newcommand{\Cc}{{\cal C}}

\newcommand{\Kc}{{\cal K}}

\newcommand{\Rc}{{\cal R}}
\newcommand{\Sc}{{\cal S}}


\newcommand{\remove}[1]{}

\theoremstyle{definition}

\theoremstyle{remark}

\newcommand{\latexe}{{\LaTeX\kern.125em2%
                      \lower.5ex\hbox{$\varepsilon$}}}

\chardef\bslash=`\\	

\makeatletter		
\def\square{\RIfM@\bgroup\else$\bgroup\aftergroup$\fi
\vcenter{\hrule\hbox{\vrule\@height.6em\kern.6em\vrule}
\hrule}\egroup}\makeatother\makeindex

\newcommand\pr[1]{{\mathbb P}\big\{#1\big\}}

\flushbottom

\definecolor{OXO-emph}{RGB}{153,0,0}

\newcommand\ceilb[1]{\left\lceil #1 \right\rceil}
\newcommand\ceilbb[1]{\bigl\lceil #1 \bigr\rceil}
\newcommand\floorb[1]{\left\lfloor #1 \right\rfloor}
\newcommand\floorbb[1]{\bigl\lfloor #1 \bigr\rfloor}

\DeclareMathAlphabet{\mathpzc}{OT1}{pzc}{m}{it}
\newcolumntype{?}{!{\vrule width 1.5pt}}




\title{New MDS codes with small sub-packetization and near-optimal repair bandwidth}

\author{Venkatesan Guruswami and Ankit Singh Rawat}%
\affil{Computer Science Department, \\ Carnegie Mellon University, \\Pittsburgh, 15213.\\ {E-mail:~venkatg@cs.cmu.edu, asrawat@andrew.cmu.edu}}



\begin{document}

\maketitle



\begin{abstract} 
An $(n, M)$ vector code $\mathcal{C} \subseteq \mathbb{F}^n$ is a collection
of $M$ codewords where $n$ elements (from the field $\mathbb{F}$) in each of
the codewords are referred to as code blocks. Assuming that $\mathbb{F} \cong
\mathbb{B}^{\ell}$, the code blocks are treated as $\ell$-length vectors over
the base field $\mathbb{B}$. Equivalently, the code is said to have the
sub-packetization level $\ell$. This paper addresses the problem of constructing MDS vector codes which enable exact reconstruction of each code
block by downloading small amount of information from the remaining code
blocks. The repair bandwidth of a code measures the information flow from the
remaining code blocks during the reconstruction of a single code block. This
problem naturally arises in the context of distributed storage systems as the
node repair problem~\cite{dimakis}. Assuming that $M = |\mathbb{B}|^{k\ell}$, the repair
bandwidth of an MDS vector code is lower bounded by $\big(\frac{n - 1}{n -
k}\big)\cdot \ell$ symbols (over the base field $\mathbb{B}$) which is also
referred to as the cut-set bound~\cite{dimakis}. For all values of $n$ and $k$, the MDS
vector codes that attain the cut-set bound with the sub-packetization level
$\ell = (n-k)^{\lceil{{n}/{(n-k)}}\rceil}$ are known in the literature~\cite{SAK15, YeB16b}.

This paper presents a construction for MDS vector codes which simultaneously
ensures both small repair bandwidth and small sub-packetization level. The
obtained codes have the smallest possible sub-packetization level $\ell = O(n -
k)$ for an MDS vector code and the repair bandwidth which is at most twice the
cut-set bound. The paper then generalizes this code construction so that the
repair bandwidth of the obtained codes approach the cut-set bound at the cost
of increased sub-packetization level. The constructions presented in this paper
give MDS vector codes which are linear over the base field $\mathbb{B}$.
\end{abstract}

\newpage


\section{Introduction}
\label{sec:intro}

Maximum distance separable (MDS) codes are considered to be an attractive solution for information storage as they operate at the optimal storage vs. reliability trade-off given by the Singleton bound~\cite{MacSlo}. For a given amount of information to be stored and available storage space, the MDS codes can tolerate the maximum number of worst case failures without losing the stored information. However, the applicability of the MDS codes in modern storage systems also depends on their ability to efficiently regenerate parts of a codeword from the rest of the codeword. Consider a distributed storage system which employs an MDS code to store information over a network of storage nodes such that each storage node stores a part of a codeword from the MDS code. Exact regeneration (repair) of the content stored in a node with the help of the content stored in the remaining nodes is useful to reinstate the system in the event of a permanent node failure. Similarly, this also enables access to the information stored on a temporarily unavailable node with the help of available nodes in the system. Therefore, among all MDS codes, the ones with more efficient exact repair mechanisms are preferred for deployment in modern distributed storage systems.

In \cite{dimakis}, Dimakis et al. study the repair problem in distributed storage systems and introduce {\em repair bandwidth}, the amount of data downloaded during a node repair, as a metric to compare various codes in terms of the efficiency of their node repair mechanisms. Let $\Cc \subseteq \F^n$ be an MDS code with $|\Cc| = |\F|^k$ codewords, each of length $n$ (over $\F$). Given a file $\fv \in \F^k$, it is mapped to a codeword in $\Cc$. Subsequently, the $n$ symbols (over $\F$) in the associated codeword are stored in $n$ distinct storage nodes in the system. For an MDS code, it is straightforward to achieve a repair bandwidth of $k$ symbols (over $\F$) by contacting any $k$ remaining nodes and downloading the $k$ distinct symbols stored on these nodes. This follows from the fact that any $k$ symbols of a codeword from an MDS codes are sufficient to reconstruct the {\em entire} codeword.
Note that the repair bandwidth of $k$ symbols (over $\F$) is the best possible if we are allowed to contact only $k$ remaining storage nodes during the repair process. Furthermore, it is not possible to regenerate a code symbol by contacting less than $k$ remaining code symbols of a codeword in an MDS code. This motivates Dimakis et al. to look for potentially lowering the repair bandwidth for repair of a single node by contacting $d \geq k$ remaining nodes in the system and downloading partial data stored on each of the contacted nodes. 

Assuming that the MDS code $\Cc$ is defined over the field $\F \cong \B^{\ell}$, we can view each of the $n$ symbols (over $\F$) in a codeword as an $\ell$-length vector over the base field $\B$. Given this vector representation of the MDS code, the repair bandwidth of an MDS code is lower bounded by~\cite{dimakis,HLKB15} 
\begin{align}
\label{eq:cut_set_d}
\Big(\frac{d}{d - k + 1}\Big)\cdot \ell~~~\text{symbols (over $\B$)}.
\end{align}
In the particular case, when $d = n -1$, i.e., all the remaining nodes in the system are contacted during the repair process, the bound on the repair bandwidth reduces to
$
\Big(\frac{n - 1}{n - k}\Big)\cdot \ell~~~\text{symbols (over $\B$)}.
$

The bound in \eqref{eq:cut_set_d} is referred to as the cut-set bound in the literature. The problem of constructing MDS codes with optimal repair-bandwidth (cf.~\eqref{eq:cut_set_d}) has been explored in \cite{RSK11, CJMRS13, zigzag13, PapDimCad13, SAK15, RKV16a, GoparajuFV16, YeB16b} and references therein. Note that as the number of the nodes contacted during the repair process $d$ gets larger, the optimal repair bandwidth defined by the cut-set bound becomes significantly smaller than the naive repair bandwidth of $k$ symbols (over $\F$) or $k\ell$ symbols (over $\B$).

This paper explores a trade-off between the sub-packetization level $\ell$ and the repair bandwidth for the MDS codes. The MDS codes that work with small sub-packetization level in addition to having small repair bandwidth are of great practical importance in distributed storage systems. The smaller sub-packetization leads to easier system implementation as it provides the system designer with greater flexibility in terms selecting various system parameters. As an example, one does not have to combine the data from multiple different sources to meet the larger sub-packetization requirement in order to be able to enable efficient repair mechanism. Note that for the given system parameters $n, k$ and $d$, we require $k \ell$ symbols (over $\B$) worth of data to store (using an MDS code with sub-packetization level $\ell$) so that we utilize the storage space in the most efficient manner. As an illustration of another practical advantage of having smaller sub-packetization level, consider a scenario where a MDS code requires a large sub-packetization level, e.g., say $\ell \geq 2^n$. This implies that using storage nodes (disks) with storage capacity of $\ell$ symbols (over $\B$), one can only design a storage system with at most $\log_2 \ell$ nodes. Therefore, larger sub-packetization level can lead to a reduced design space in terms of various system parameters.

\noindent \textbf{Our contributions.}~We present a new and simple construction for MDS codes which have small sub-packetization level while allowing for exact repair of all code symbols with near-optimal repair bandwidth. This construction highlights a trade-off between the sub-packetization level and the repair bandwidth for exact repair. The construction is obtained by utilizing the parity-check view of a linear code. Assuming that the desired sub-packetization level is $\ell$, we start with a parity-check matrix of a simple MDS code (over $\F \cong \B^{\ell}$) of length $n$ which is obtained by stacking $\ell$ codewords from $\ell$ independent MDS codes (over $\B$) of length $n$. We then carefully replace some of the zero entries of this parity-check matrix with non-zeros elements from $\B$ and obtain a parity check matrix of a new MDS code that has an exact repair mechanism with small repair bandwidth. It follows from the construction that the obtained MDS codes are linear over the field $\B$. 

We note that throughout this paper we consider the setting with $d = n-1$, i.e., all the remaining code blocks contribute to the exact repair of a single code block. We list the parameters and some of the exact-repair related properties of the obtained codes in the following.

\begin{itemize}
\item \textbf{Codes with repair bandwidth at most twice the cut-set bound.}~We first present a family of MDS codes that have sub-packetization level $\ell = n - k$ and the repair bandwidth that is strictly less than $2(n-1)$ symbols (over $\B$). Note that this is twice the cut-set bound (cf.~\eqref{eq:cut_set_d}) which takes the value $(n-1)$ symbols (over $\B$) for $\ell = n - k$. {We argue that the sub-packetization level $\ell = \Omega(n - k)$ is the smallest that we can hope for an MDS code with the aforementioned guarantee on its repair bandwidth (See Appendix~\ref{appen:appen_sub}).}
\item \textbf{Codes with repair bandwidth approaching the cut-set bound.}~We generalize the ideas used in the construction with $\ell = n - k$ to obtain the MDS codes that have improved repair bandwidth at the cost of increased sub-packetization level. In particular, for an integer $t \geq 2$, we obtain a family of MDS codes with sub-packetization level $\ell = (n - k)^t$ and the repair bandwidth which is at most $(1 + \frac{1}{t})$ times the value of the cut-set bound.
\item \textbf{Exact repair using repair-by-transfer schemes}~The codes presented in this paper are MDS codes defined over $\F = \B^{\ell}$ which are linear over the base field $\B$. These codes are referred to as linear MDS vector codes or linear MDS vector codes in the literature. For these codes, $n$ code blocks are stored in the form of an $\ell$-length vectors (over $\B$) in $n$ distinct nodes. The exact repair of each code block in these codes involves downloading a subset of the symbols from the remaining code blocks. Such repair mechanisms are referred to as the {\em uncoded repair} or {\em repair-by-transfer} in the literature. The repair-by-transfer schemes form a sub-class of all possible linear repair schemes where a contacted node can potentially send symbols (over $\B$) which are linear combinations of all $\ell$ symbols of the code-block stored on this node. We note that repair-by-transfer is desirable over other complicated repair schemes due to its operational simplicity and the minimal computation requirements at the contacted nodes.
\end{itemize}

\noindent \textbf{Organization.}
We introduce the necessary background along with a discussion on the related work in Section~\ref{sec:background}. In Section~\ref{sec:main} we define the notion of near-optimal exact-repairable MDS codes and summarize the code parameters achievable by our construction. We present the construction of the MDS codes with $\ell = n - k$ and repair bandwidth at most twice the value of the cut-set bound in Section~\ref{sec:twiceRB}. In Section~\ref{sec:RB_t} we present the the general construction that gives MDS codes with their repair bandwidth approaching the optimal repair bandwidth. We conclude the paper in Section~\ref{sec:conclusion} where we comment on the constructions of the codes with general values of $d$ (the number of blocks contributing to the repair process) and discuss other directions for future work.

\section{Background and related work}
\label{sec:background}

In this section we formally introduce vector codes and the related concepts used in this paper. 
We then describe the exact repair problem in the context of distributed storage systems and survey the related work. 


\subsection{Preliminaries}
\label{sec:prelims}

Let $\mathbbm{1}_{\{\cdot\}}$ denote the standard indicator function. Given two vectors $\xv, \yv \in \B^{n\ell}$, we defined the (block) Hamming distance between them as  $d_{\rm H}(\xv, \yv) = \sum_{i = 1}^{n}\mathbbm{1}_{\{\xv_i \neq \yv_i \}},$
where for $i \in [n]$, we have $\xv_i = (x_{(i-1)\ell + 1},\ldots,x_{i\ell})$ and $\yv_i = (y_{(i-1)\ell + 1},\ldots,y_{i\ell})$. For a finite field $\B$, we say that a set of vectors $\Cc \subseteq \B^{n\ell}$ is an $(n, M, d_{\min}, \ell)_{|\B|}$ {\em vector code} (or in short, $(n, M)$ vector code) if we have $|\Cc| = M$ and $d_{\min} := \min_{\xv \neq \yv \in \Cc}d_{\rm H}(\xv, \yv)$.  Given a codeword $\cv = (c_1, c_2,\ldots, c_{n\ell}) \in \Cc \subseteq \B^{n\ell}$, we use $$\cv_i = (c_{(i - 1)\ell + 1}, c_{(i - 1)\ell + 2},\ldots, c_{i\ell}) \in \B^{\ell}$$ to denote the $i$-th 
{\em code block} in the codeword. When the code $\Cc$ spans a $\B$-linear subspace of dimension $K = \log_{|\B|}{M}$, we call $\Cc$ to be a {\em linear} vector code and refer to it as an $[n, K, d_{\min}, \ell]_{|\B|}$ vector code. An $[n, K = \log_{|\B|}{M}, d_{\min}, \ell]_{|\B|}$ vector code is said to be a {\em linear MDS vector code} if we have
$
K = \log_{|\B|}{M} = k\ell~~\text{and}~~d_{\min} = n - k + 1.
$
Note that an $[n, k\ell, d_{\min}, \ell]_{|\B|}$ vector code can be defined by an $(n-k)\ell \times n\ell$ parity-check matrix
\begin{align}
\label{eq:ArrayParity}
\Hm = \left( \begin{array}{cccc}
H_{1,1} & H_{1, 2} & \cdots & H_{1,n}\\
H_{2,1} & H_{2, 2} & \cdots & H_{2,n}\\
\vdots & \vdots & \ddots & \vdots \\
H_{r,1} & H_{r, 2} & \cdots & H_{r,n}\\
\end{array} \right) \in \B^{(n-k)\ell \times n\ell},
\end{align}
where each $H_{i, j}$ is an $\ell \times \ell$ matrix with its entries belonging to the finite field $\B$. For a set $\Sc = \{i_1, i_2,\ldots, i_{|\Sc|}\} \subseteq [n]$, we define the $(n-k) \ell \times |\Sc|\ell$ matrix $\Hm_{\Sc}$ as follows. 
\begin{align}
\label{eq:ParitySub}
\Hm_{\Sc} = \left( \begin{array}{cccc}
H_{1, i_1} & H_{1, i_2} & \cdots & H_{1, i_{|\Sc|}}\\
H_{2, i_1} & H_{2, i_2} & \cdots & H_{2, i_{|\Sc|}}\\
\vdots & \vdots & \ddots & \vdots \\
H_{r, i_1} & H_{r, i_2} & \cdots & H_{r, i_{|\Sc|}}
\end{array} \right) \in \B^{(n-k) \ell \times |\Sc|\ell}.
\end{align}
Note that the matrix $\Hm_{\Sc}$ comprises those coefficients in the linear constraints defined by the parity-check matrix $\Hm$ that are associated with the code blocks indexed by the set $\Sc \subseteq [n]$. The parity-check matrix $\Hm$ defines an MDS vector codes if for every $\Sc \subseteq [n]$ with $|\Sc| = n-k$, the $r\ell \times r\ell$ sub-matrix $\Hm_{\Sc}$ is full rank.

\subsection{Exact-repair problem for MDS vector codes}
\label{sec:exact-repair}

Let $\Cc \in \B^{n \ell}$ be a vector code with $K = \log_{|\B|}{|\Cc|} = \log_{|\B|}{M}$. Consider an encoding process which encodes a file $\fv \in \B^{K}$ to a codeword $\cv = (\cv_1, \cv_2,\ldots, \cv_n) \in \Cc \subseteq \B^{n\ell}$, where for every $i \in [n]$ we have $\cv_i \in \B^{\ell}$. We require the encoding process to ensure that the original file $\fv$ can be reconstructed from any $k$ out of the $n$ code blocks in the codeword $\cv$, i.e., for any $\Kc \subseteq [n]$ such that $|\Kc| = k$, $\fv$ can be reconstructed from the code blocks $\{\cv_i\}_{i \in \Kc}$. The exact-repair problem imposes the requirement that for every $i \in [n]$ and $\Rc \subseteq [n]\backslash \{i\}$ with $|\Rc| = d$, we have a collection of functions, 
$
\big\{h^{(i)}_{j, \Rc}~:\B^{\ell} \rightarrow \B^{\beta_{j,i}}\big\}_{j \in \Rc}
$
such that $\cv_i$ is a function of the symbols in the set $\{h^{(i)}_{j, \Rc}(\cv_j)\}_{j \in \Rc}$. This implies that  for every $i \in [n]$, the code block $\cv_i$ can be {\em exactly repaired} (regenerated) by contacting any $d$ out of $n-1$ remaining code blocks in the codeword $\cv$ (say indexed by the set $\Rc \subseteq [n]\backslash \{i\}$) and downloading at most $\sum_{j \in \Rc}\beta_{j, i}$ symbols (over $\B$) from the contacted code blocks.

In \cite{dimakis}, Dimakis et al. formally study the repair problem for vector codes in the setup described above. They introduce {\em repair bandwidth}, the total number of symbols downloaded during the repair process, as a measure to characterize the efficiency of the repair process\footnote{Dimakis et al. consider a broader repair framework, namely {\em functional repair} framework~\cite{dimakis}. Under functional repair framework, $\tilde{\cv}_i \in \B^{\ell}$ which may potentially be different from the code block under repair $\cv_i \in \B^{\ell}$ is an acceptable outcome of the repair process as long as it preserves certain properties of the original codeword. For further details, we refer the reader to \cite{dimakis, DRWS2011}. Here, we note that the lower bounds obtained for the functional repair problem are also applicable to the exact repair problem considered in this paper.}. Assuming that we download the same number of symbols from each of the contacted code blocks, i.e., $\beta_{j, i} = \beta$ (symbols over $\B$) for all $j \in \Rc$, Dimakis et al. obtain the following cut-set bound on the repair bandwidth of an MDS vector code~\cite{dimakis}.
\begin{align}
\label{eq:cut-set}
d\beta \geq \left(\frac{d}{d - k + 1}\right)\cdot \ell~~\text{(symbols over $\B$)}.
\end{align}

Interestingly, the lower bound on the repair bandwidth of an MDS vector code given in \eqref{eq:cut-set} continues to hold even when the contacted nodes contribute different number of symbols during the repair process~\cite{HLKB15}, i.e., for every $i \in [n]$, we have
\begin{align}
\label{eq:cut-set-gen}
\sum_{j \in \Rc}\beta_{j, i} \geq \left(\frac{d}{d - k + 1}\right)\cdot \ell~~\text{(symbols over $\B$)},~~\forall~\Rc \subseteq [n]\backslash \{i\}~\text{s.t.}~|\Rc| = d.
\end{align}

The problem of constructing {\em exact-repairable MDS vector code}, MDS vector codes that enable exact repair of all code blocks, with optimal repair bandwidth (cf.~\eqref{eq:cut-set}) has been explored by many researchers. In \cite{RSK11}, Rashmi et al. present an explicit construction for exact-repairable MDS vector codes. This construction works with the sub-packetization level $\ell = d - k + 1$. However, the construction requires $2k - 2 \leq d \leq n - 1$, which leads to low information rate, i.e., $\frac{k}{n} \leq \frac{1}{2} + \frac{1}{2n}$. Towards constructing high-rate exact-repairable MDS codes with optimal repair-bandwidth, Cadambe et al.~\cite{CJMRS13} show the existence of such codes when sub-packetization level approaches infinity. Motivated by this result, the problem of designing high-rate exact-repairable MDS codes with finite sub-packetization level and optimal repair bandwidth is explored in \cite{PapDimCad13, zigzag13, SAK15, WTB12, Cadambe_poly, RKV16a, GoparajuFV16, YeB16a, YeB16b} and references therein.

The code construction based on Hadamard matrices from \cite{PapDimCad13} requires $d = n-1$ and $n - k = 2$. In \cite{zigzag13}, Tamo et al. propose the zigzag code construction for $d = n-1$ and every value of $r = n - k$. This construction enables exact repair of only $k$ (systematic) code blocks. Wang et al.~\cite{zigzag_allerton11} generalize the zigzag code construction to enable exact repair of all code symbols. The constructions presented in \cite{PapDimCad13, zigzag13, zigzag_allerton11}  work with the sub-packetization level $\ell$ which is exponential in $k$. For $d = n-1$ and all values of $r = n - k$, Sasidharan et al.~\cite{SAK15} construct exact-repairable MDS vector codes that have optimal repair bandwidth and work with the sub-packetization level  $\ell = (n - k)^{\ceilb{\frac{n}{n-k}}}$. Note that for $r = n - k = \Omega(n)$, this construction provides the codes with the sub-packetization level which is polynomial in $n$.  {The construction with $d = n - 1$ and the similar sub-packetization levels that enable exact-repair of only $k$ systematic nodes are also presented in \cite{WTB12, Cadambe_poly}.} The construction from \cite{SAK15} is generalized to work for all possible values of $k \leq d \leq n - 1$ with the sub-packetization level $\ell = (d - k + 1)^{\ceilb{\frac{n}{d - k + 1}}}$ in \cite{RKV16a}. 

The MDS codes presented in \cite{PapDimCad13, zigzag13, zigzag_allerton11, WTB12, Cadambe_poly} are obtained by designing a suitable generator matrix for these codes. On the other hand, \cite{SAK15, RKV16a} design the proposed codes by constructing parity check matrices with certain combinatorial structures. We note that in most of these constructions, certain elements in the generator/parity-check matrices are not explicitly specified. These papers argue the existence of good choices for these elements provided that the field size is large enough. Recently, Ye and Barg~\cite{YeB16b} have presented a fully explicit construction for MDS codes with $d = n-1$ and the sub-packetization level $\ell = (n - k)^{\ceilb{\frac{n}{n - k}}}$ by designing the associated parity-check matrices. This construction is closely related to the construction presented in \cite{SAK15} in terms of the combinatorial structure of the parity-check matrix. We also note that the construction from \cite{YeB16b} also works for general values of $k \leq d \leq n - 1$ with suitably modified sub-packetization levels similar to the sub-packetization levels used in \cite{RKV16a}. In Table~\ref{tab:comparison} we summarize code parameters of state-of-the-art constructions in different settings.

\bgroup
\def\arraystretch{1.4}
\begin{table}[t!]
\footnotesize
\centering
\captionsetup{justification=centering}
\caption{Comparison of the proposed code construction with various existing code constructions for MDS codes that have small repair bandwidth for exact repair. We focus only on the setting with $d = n - 1$.}
\begin{tabular}{c|c|c|c|c|}
  \hline \hline
  Code construction & Sub-packetization level & Repair bandwidth & Repair by transfer & Information rate \\
  \hline
  Rashmi et al., 2011 \cite{RSK11} & $\ell = n - k$ & $\big(\frac{n-1}{n-k}\big)\cdot \ell$ & No & $2 \leq 2k \leq n + 1$ \\
  \hline
  Ye and Barg, 2016 \cite{YeB16b} & $\ell = (n - k)^{\ceilb{\frac{n}{n-k}}}$ & $\big(\frac{n-1}{n-k}\big)\cdot \ell$ & Yes & $1 \leq k \leq n - 1$ \\
  \hline
  \pbox{20cm}{~~~~~~~~~~This paper \\ (design parameter $t \geq 1$)} & $\ell = (n - k)^{t}$ & $\leq (1 + \frac{1}{t})\cdot\big(\frac{n-1}{n-k}\big)\cdot \ell$ & Yes & $1 \leq k \leq n - 1$ \\
\hline
\end{tabular}
\label{tab:comparison}
\end{table}
\egroup

Some converse results on the sub-packetization level that is necessary for an MDS code to attain the cut-set bound are presented in \cite{GTC14, TWB14}. For $d = n - 1$, Goparaju et al.~\cite{GTC14} show that an exact-repairable MDS code that downloads the same number of symbols from each of the contacted code blocks and employs linear repair schemes satisfies the following bound on its sub-packetization level. 
\begin{align}
\label{eq:gtc_bound}
k \leq 2(\log_{2}\ell)\big(\floorbb{\log_{\frac{n-k}{n-k-1}}\ell} + 1\big).
\end{align}
On the other hand, Tamo et al.~\cite{TWB14} obtain the following lower bound on the sub-packetization level of an MDS vector code which enables exact repair using repair-by-transfer schemes.
\begin{align}
\label{eq:twb_bound}
\ell \geq (n - k)^{\frac{k}{n - k}}.
\end{align}
Note that repair-by-transfer schemes constitute a sub-class of all possible linear repair schemes. In light of the bound in \eqref{eq:twb_bound}, the MDS codes obtained in \cite{SAK15, YeB16b} enable repair-by-transfer mechanisms with optimal repair bandwidth and near-optimal sub-packetization level. However, this sub-packetization level can be prohibitively large for some storage systems, especially when the code has high rate or equivalently has small value of $r = n  - k$. This motivates us to explore the question of designing MDS codes that work with small sub-packetization level and provide repair-by-transfer mechanism for exact repair problem without incurring much degradation in terms of the repair bandwidth. In Table~\ref{tab:comparison}, we compare the proposed construction with the previously known constructions.

The problem of constructing exact-repairable MDS codes with small repair bandwidth and small sub-packetization level has been previously addressed in \cite{zigzag13, RSR13, TK16}. We note that our construction shares some similarities with the constructions presented in \cite{zigzag13, RSR13} as these constructions are obtained by introducing coupling among multiple independent codes as well. However, we work with the parity check matrix view (as opposed to the generator matrix view considered in \cite{zigzag13, RSR13}) which ensures identical repair guarantees for all code blocks without distinguishing between systematic and parity blocks. For $r = n - k = 2$, Tamo and Efremenko construct exact-repairable MDS codes with near-optimal repair bandwidth and sub-packetization level $\ell = O(\log n)$ in \cite{TK16}. The code construction obtained in \cite{TK16} also satisfies the additional requirement that the same amount of data is downloaded from all the contacted blocks.

\noindent \textbf{Exact repair of known codes with small repair bandwidth.} The problem of devising exact repair mechanism with small repair bandwidth for known MDS codes has been studied in \cite{WDB10, SPDG14, XCode14, GW15}. In particular, \cite{SPDG14, GW15} consider the exact repair problem for the well-known Reed-Solomon codes. In \cite{GW15}, Guruswami and Wootters characterize optimal repair bandwidth for these codes in certain regimes of system parameters. \\

\noindent \textbf{Locally repairable codes.}~Another line of work in distributed storage focuses on {\em locality}, the number of the code blocks contacted during the repair of a single code block,  as a metric to characterize the efficiency of the repair process. The bounds on the failure tolerance of locally repairable codes, the codes with small locality, have been obtained in \cite{Gopalan12, PapDim12, KPLK12, RKSV12} and references therein. Furthermore, the constructions of locally repairable codes that are optimal with respect to these bounds are presented in \cite{Gopalan12, PapDim12, KPLK12, RKSV12, BlaumHH13, TamoBarg14, Gopalan14}. Locally repairable codes that also minimize the repair bandwidth for repair of a code block are considered in \cite{KPLK12, RKSV12}. Here we note that the locally repairable codes are not MDS codes, and thus have extra storage overhead.


\section{Exact-repairable MDS codes with near-optimal repair bandwidth}
\label{sec:main}

This paper aims to construct exact-repairable MDS vector codes with small sub-packetization level $\ell$ while achieving near-optimal repair bandwidth, i.e., incurring a small (multiplicative) loss in terms of the repair bandwidth. Towards this, we first introduce the notion of near-optimal repair bandwidth for MDS codes. 
\begin{definition}
\label{def:ApproxMSR}
Let $\Cc$ be an $[n, k\ell, d_{\min} = n - k + 1, \ell]_{|\B|}$ MDS vector code. We call $\Cc$ to be an $(a, \ell, d)$-exact-repairable MDS code if for every $i \in [n]$ and $\cv = (\cv_1, \cv_2,\ldots, \cv_n) \in \Cc$, we can perform exact repair of the code block $\cv_i$ by contacting $d$ other code blocks and downloading at most 
$
a \big(\frac{d}{d - k + 1}\big)\cdot \ell
$
symbols (over $\B$) from the contacted code blocks.
\end{definition}

\begin{remark}
It follows from the bound in \eqref{eq:cut-set-gen} that for any exact-repairable MDS vector code we must have $a \geq 1$. Thus, $(a = 1, \ell, d)$-exact-repairable MDS codes correspond to the exact-repairable MDS vector codes with optimal repair bandwidth. Moreover, we say an MDS code has near-optimal repair bandwidth if it is an $(a, \ell, d)$-exact-repairable MDS code for a small constant $a$.
\end{remark}

In this paper, we focus on the setting with $d = n - 1$, i.e., all the remaining $n-1$ code blocks are contacted to repair a single code block. We now state the main result of this paper which summarizes the parameters of the codes constructed in this paper.

\begin{theorem}
\label{thm:main}
For an integer $1 \leq t \leq \ceilbb{{n}/{(n-k)}} - 1$ and a suitably chosen large enough field $\B$, the general construction presented in this paper gives $\big(1 + {1}/{t}, (n-k)^t, n-1\big)$-exact-repairable MDS vector codes. Moreover, the obtained codes allow for repair-by-transfer schemes.
\end{theorem}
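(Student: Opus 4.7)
The plan is to construct, for each $t$ in the stated range, an explicit $(n-k)\ell \times n\ell$ parity-check matrix $\Hm$ with $\ell = r^t$ (writing $r = n-k$) whose $\ell \times \ell$ blocks $H_{i,j}$ have a prescribed sparsity pattern, and then to verify both (a) the MDS property by choosing nonzero entries from a large enough base field $\B$, and (b) the existence of a repair-by-transfer scheme meeting the bandwidth target. I would index the $\ell$ layers within each $\ell \times \ell$ block by tuples $\av \in [r]^t$ and take as the baseline an ``uncoupled'' parity check obtained by stacking $\ell$ independent $[n,k]$ MDS codes over $\B$; this contributes a diagonal part $H^{\text{diag}}_{i,j} = \mathrm{diag}\bigl(h^{(\av)}_{i,j}\bigr)_{\av \in [r]^t}$. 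To enable efficient repair, I would then introduce carefully chosen off-diagonal coupling entries. Specifically, partition the $n$ block-indices into $t$ groups via a balanced map $\tau : [n] \to [t]$ (possible because $t \le \lceil n/r\rceil - 1$ guarantees each group has size at least roughly $r$), and for each $j$ allow nonzero entries in $H_{i,j}$ only between tuples $\av,\bv$ that agree on all coordinates except possibly the $\tau(j)$-th.

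For the repair-by-transfer scheme, suppose we wish to reconstruct $\cv_{j^*}$. Look at the $r\ell$ parity-check equations as indexed by pairs $(i,\av) \in [r] \times [r]^t$: the coupling structure means each such equation involves the entire $\tau(j^*)$-axis slice $\{\cv_{j^*}[\av']: \av'_s = \av_s \text{ for } s \neq \tau(j^*)\}$ of $\cv_{j^*}$, together with single-coordinate entries of other blocks $\cv_{j'}$ lying on their own $\tau(j')$-axis slices. Fix one coordinate $\av_{-\tau(j^*)} \in [r]^{t-1}$; the $r$ equations corresponding to $i \in [r]$ and to layers sharing that coordinate form a linear system with the $r$ unknowns of the slice of $\cv_{j^*}$, plus data from other blocks. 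I would show that we can recover the slice by downloading, from each other block $j'$, only those entries of $\cv_{j'}$ lying along a single slice (of size $r^{t-1}$) oriented along axis $\tau(j^*)$ when $\tau(j') = \tau(j^*)$ (that is, $1/r$ fraction of $\cv_{j'}$), and at most two such slices when $\tau(j') \neq \tau(j^*)$. Summing over the $n-1$ other blocks and exploiting the balance of $\tau$, the total download is at most $(1 + 1/t)(n-1)r^{t-1} = (1+1/t)\cdot\tfrac{n-1}{r}\cdot \ell$ symbols over $\B$, as required; and every contacted node only forwards raw base-field symbols, which is a repair-by-transfer scheme.

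For the MDS property, fix any $\Sc \subseteq [n]$ with $|\Sc|=r$ and view $\det \Hm_{\Sc}$ as a polynomial $P_{\Sc}$ in the free parameters (the baseline coefficients $h^{(\av)}_{i,j}$ and the coupling coefficients). I would exhibit an assignment of these parameters under which the coupling terms vanish and $P_{\Sc}$ factorizes into a product of determinants of $r \times r$ sub-matrices of $\ell$ independent MDS parity checks, which are nonzero by the MDS property of the baseline codes. Hence each $P_{\Sc}$ is a nonzero polynomial, and since there are only finitely many such $\Sc$, by Schwartz--Zippel (or the Combinatorial Nullstellensatz) a common assignment over a sufficiently large $\B$ makes every $P_{\Sc}$ nonzero, yielding the MDS property. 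The main obstacle I foresee is the bookkeeping in the repair step: arguing that the specific slices downloaded supply enough linear relations to eliminate all interference from other blocks and isolate $\cv_{j^*}$, while keeping the per-node download within the $(1+1/t)$ factor, requires precise use of the coupling structure and of the bound $t \le \lceil n/r\rceil - 1$; by contrast, the MDS verification is a routine genericity argument once the construction is pinned down.
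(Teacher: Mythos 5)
Your high-level strategy---stack $\ell = r^t$ independent $[n,k]$ MDS codes as an uncoupled diagonal baseline, add sparse coupling blocks governed by a map from node indices into $[t]$ (axes of the hypercube $[r]^t$), repair by downloading axis-aligned slices, and then use a Schwartz--Zippel genericity argument to nail down the MDS property---is indeed the same skeleton the paper uses. The Schwartz--Zippel part is essentially correct and matches the paper's Propositions~\ref{prop:perturbedMDS} and~\ref{prop:perturbedMDS_random}: one sets the coupling parameter to zero, observes that every $r\ell\times r\ell$ minor factors into nonzero minors of the baseline, and concludes that a generic (or transcendental) choice of the coupling scalar over a large enough $\B$ preserves full rank.

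There is, however, a genuine gap in the repair step, and your download accounting is in fact wrong. You propose to download one $r^{t-1}$-slice from each helper $j'$ with $\tau(j')=\tau(j^\ast)$ and at most two such slices from each helper with $\tau(j')\neq\tau(j^\ast)$. With a balanced $\tau$, roughly $n/t$ nodes lie in each class, so your count is about
$\bigl(n/t-1\bigr) + 2\bigl(n-n/t\bigr) = 2n - n/t - 1$
slices, i.e., roughly $(2-1/t)\cdot\frac{n}{r}\cdot\ell$ symbols, which exceeds the $(1+1/t)\cdot\frac{n-1}{r}\cdot\ell$ target for every $t\ge 2$ (and for $t=1$ your scheme does not account for recovering the $(r-1)r^{t-1}$ off-slice symbols at all). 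Moreover, the paper's scheme is \emph{not} balanced in the way you suggest: the "extra'' downloads come only from the few nodes in the \emph{same} $\tau$-class as $j^\ast$ (and even more narrowly, in the same group of the finer $r$-way partition), and those few helpers forward essentially their entire code block, while every other node forwards exactly one slice. Concretely, the paper two-level-indexes nodes by $(u,v)\in[r]\times[s]$ with $s=n/r$, sets the axis class to $a=\overline{v}^{\{t\}}$, and defines Type~II constraints that combine a Vandermonde-weighted sum $\sum\lambda^p_i\,c(\cdot)$ with a $\rho$-weighted coupling term whose tuple index has its $a$-th coordinate shifted by $p$ modulo $r$. The role of the $u$-coordinate (namely, pinning down which value $x_a=u^\ast$ is recovered in Stage~1) and the cyclic shift $\overline{x_a+p}^{\{r\}}$ are exactly what make Stage~2 close: every term in part~(b-II-2) of \eqref{eq:type2at_bII} has $a$-th coordinate $u^\ast$ and is already known after Stage~1, so only the $\lfloor s/t\rfloor$ terms in part~(b-II-1) are new. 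Your proposal replaces this two-level structure by a single map $\tau:[n]\to[t]$ and leaves the coupling coefficients unspecified, which is precisely why the interference-elimination step you flag as "the main obstacle'' does not go through as written; pinning down the $(u,v)$ grouping, the Vandermonde weights, and the coordinate-shift coupling is not bookkeeping but the core of the construction.
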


We present our construction for $t =1$, which gives $(2, n-k, n-1)$-exact-repairable MDS vector codes in Section~\ref{sec:twiceRB}. This construction conveys the main ideas behind our approach and establishes Theorem~\ref{thm:main} for $t = 1$. The general construction which establishes Theorem~\ref{thm:main} for all values of $t$ is presented in Section~\ref{sec:RB_t}.

\begin{remark}
\label{rem:t_opt}
We note that for a given value of $t$, $\big(1 + {1}/{t}\big)$ only serves as a clean upper bound on the repair bandwidth of the codes obtained in this paper. Specifically, if we substitute $t = \ceilbb{{n}/{(n-k)}}$ in general construction (cf.~Section~\ref{sec:RB_t}), we obtain $\big(1, (n - k)^{\ceilbb{{n}/{(n-k)}}}, n-1\big)$-exact-repairable MDS vector codes. This matches the best know sub-packetization level for optimal repair bandwidth, which is also near-optimal by \eqref{eq:twb_bound}. In fact, in this case our construction specializes to the construction from \cite{SAK15}.
\end{remark}


\section{Construction of $(2, n-k, n - 1)$-exact-repairable MDS code}
\label{sec:twiceRB}

In this section, we present a construction of exact-repairable MDS vector codes for all values of $n$ and $k$. These codes have sub-packetization level $\ell = n-k$ and require $d = n-1$ code blocks during the repair process and have their repair bandwidth at most
$
2\big(\frac{n-1}{n - k}\big)\cdot \ell,
$
which is twice the cut-set bound (cf.~\eqref{eq:cut-set}). The construction is described in Section~\ref{subsec:construction2}. We illustrate the repair-by-transfer scheme for the obtained codes in Section~\ref{subsec:repair2}. We argue the MDS property for the construction in Section~\ref{subsec:mdsness2}.

\subsection{Code construction}
\label{subsec:construction2}
For an integer $a > 0$, we use $[a]$ to denote the set $\{1,2,\ldots, a\}$. Let $r = n - k$. For ease of exposition, we assume that $r | n$ and $n = sr$. We partition the $n$ code blocks in $r = n-k$ groups of size $s$ each\footnote{For a setting where $r\nmid n$, we can partition the $n$ code blocks in $r = n - k$ groups, $n~({\rm mod}~r)$ groups with $\ceilbb{\frac{n}{r}}$ code blocks and the remaining groups with $\floorbb{\frac{n}{r}}$ code blocks. The rest of the construction can be easily modified to work in this case as well.}. This partitioning allows us to index each code block by a tuple $(u, v)$ where $u \in [r] = [n-k]$ and $v \in [s]$. In particular, for $i \in [n]$ the associated tuple $(u, v)$ satisfies
$
i = (u-1)s + v.
$
With this notation in place, for $(u, v) \in [r] \times [s]$, we denote the $\big((u-1)s + v\big)$-th code block as 
$$\cv_{(u-1)s + v} = \cv_{(u, v)} = \big(c(1; (u, v)), c(2; (u,v)),\ldots, c(r; (u, v)) \big) \in \B^{r}.$$
Here, for $x \in [r]$, $c(x; (u, v))$ denotes the $x$-th symbol (over $\B$) of the $\big((u-1)s + v\big)$-th code block. 

In order to construct an $[n, k\ell, d_{\min} = n- k + 1, \ell = r = n-k]_{\B}$ MDS vector code $\Cc$, we specify an $r\ell \times n\ell$ (or $r^2 \times nr$ for our choice of $\ell$) parity-check matrix $\Pm$ for the code $\Cc$. We classify the linear constraints defined by the parity-check matrix $\Pm$ into two types. Let $\{\lambda_i\}_{i \in [n]}$ be $n$ distinct non-zero elements of $\B$.
\begin{itemize}
\item {\bf {\rm Type~I} constraints:}~We have $r$ {\rm Type~I} constraints which are defined by the first $r$ rows of the matrix $\Pm$. For every $x \in [r]$, we have
\begin{align}
\label{eq:type1}
\sum_{(u,v) \in [r]\times [s]}c(x; (u,v)) = 0.
\end{align}
In Example~\ref{ex:ex1} below, the {\rm Type~I} constraints correspond to the identity blocks of the matrix $\Pm$ (cf.~\eqref{eq:Pm}).
\item {\bf {\rm Type~II} constraints:}~Let $\rho$ be an indeterminate which we specify later. We have $(r-1)\ell = (r-1)r$ {\rm Type~II} constraints which are defined as follows. For every $p \in \{1,\ldots, r - 1\}$ and $x \in [r]$, we have 
\begin{align}
\label{eq:type2}
\underbrace{\sum_{(u, v) \in [r]\times [s]}\lambda^p_{(u-1)s + v} \cdot c(x; (u,v))}_{\text{(a)}} + \underbrace{\sum_{v \in [s]}\rho \cdot c(\overline{x + p}; (x, v))}_{\text{(b)}} = 0,
\end{align}
where for a strictly positive integer $l$, the quantity $\overline{l}$ is defined as follows.
\begin{align}
\label{eq:mod_def}
\overline{l} = \begin{cases}
r &\mbox{if}~l~({\rm mod}~r) = 0, \\
l~({\rm mod}~r) &\mbox{otherwise}.
\end{cases}
\end{align}
We can partition the {\rm Type~II} constraints (cf.~\eqref{eq:type2}) into $(r - 1)$ groups (each group containing $\ell = r$ linear constraints) according to the value of $p \in \{1,\ldots, r-1\}$. In particular, $r$ constraints associated with the same value of $p$ constitute those $r$ rows of the parity-check matrix $\Pm$ which are indexed by the set $\{pr + 1,\ldots, (p+1)r\}$. (See the non-identity blocks of the matrix $\Pm$ in \eqref{eq:Pm}.)
\end{itemize}
\begin{example}
\label{ex:ex1}
We illustrate the construction with an example. Assume that $n = 6$ and $k = 3$, i.e., $n - k = r = 3$. For these values of the system parameters, our $9 \times 18$ parity check matrix takes the following form. 
\begin{align}
\label{eq:Pm}
\footnotesize
\Pm &= \left(\begin{array}{ccc|ccc?ccc|ccc ? ccc|ccc}
1 & 0 & 0 &1 & 0 & 0 &1 & 0 & 0 &1 & 0 & 0 &1 & 0 & 0 &1 & 0 & 0  \\ 
0 & 1 & 0 &0 & 1 & 0 &0 & 1 & 0 &0 & 1 & 0 &0 & 1 & 0 &0 & 1 & 0  \\ 
0 & 0 & 1 &0 & 0 & 1&0 & 0 & 1 &0 & 0 & 1 &0 & 0 & 1 &0 & 0 & 1  \\  \hline
\lambda_1 & {\color{red} \rho} & 0 & \lambda_2 & {\color{red} \rho} & 0 &\lambda_3 & 0 & 0 &\lambda_4 & 0 & 0 &\lambda_5 & 0 & 0 &\lambda_6 & 0 & 0  \\ 
0 & \lambda_1 & 0 & 0 & \lambda_2 & 0 & 0 &\lambda_3 & {\color{red} \rho} & 0 &\lambda_4 & {\color{red} \rho} & 0 &\lambda_5 & 0 & 0 &\lambda_6 & 0  \\  
0 & 0 & \lambda_1 & 0 & 0 & \lambda_2 & 0 & 0 &\lambda_3 & 0 & 0 &\lambda_4 & {\color{red} \rho} & 0 &\lambda_5 & {\color{red} \rho} & 0 &\lambda_6 \\  \hline
\lambda^2_1 & 0 & {\color{red} \rho} & \lambda^2_2 & 0 & {\color{red} \rho} &\lambda^2_3 & 0 & 0 &\lambda^2_4 & 0 & 0 &\lambda^2_5 & 0 & 0 &\lambda^2_6 & 0 & 0  \\ 
0& \lambda^2_1 & 0 & 0 & \lambda^2_2 & 0 & {\color{red} \rho} &\lambda^2_3 & 0 & {\color{red} \rho} &\lambda^2_4 & 0 & 0 &\lambda^2_5 & 0 & 0 &\lambda^2_6 & 0 \\ 
0 & 0 & \lambda^2_1 & 0 & 0 & \lambda^2_2 & 0 & 0 &\lambda^2_3 & 0 & 0 &\lambda^2_4 & 0 & {\color{red} \rho} &\lambda^2_5 & 0 & {\color{red} \rho} &\lambda^2_6  \\ 
\end{array} \right).
\end{align}
The matrix $\Pm$ can be viewed as the perturbation of the block matrix  $\Hm$ which is obtained by replacing all $\rho$ entries in $\Pm$ with zeros. In particular, we can rewrite the matrix $\Pm$ as 
$$
\Pm = \Hm + \Em^{\rho},
$$
where $\Em^{\rho}$ denotes the $9 \times 18$ matrix which contains all the $\rho$ entries in $\Pm$ (cf.~\eqref{eq:Pm}) as its only non-zero entries. (See Figure~\ref{fig:PHE}.) Note that the block matrix $\Hm$ (with diagonal blocks) is a parity-check matrix of an $[n = 6, k\ell = 9, d_{\min} = 4, \ell = 3]_{\B}$ MDS vector code. Here, we also point out that the matrix $\Hm$ is defined by {\rm Type~I} constraints  (cf.~\eqref{eq:type1}) and the part $(a)$ of the {\rm Type~II} constraints (cf.~\eqref{eq:type2}). Similarly, the perturbation matrix $\Em^{\rho}$ is defined by the part $(b)$ of the {\rm Type~II} constraints (cf.~\eqref{eq:type2}).
\end{example}

\begin{figure}[htbp]
\begin{align}
\tiny
\Hm &= \left(\begin{array}{ccc|ccc|ccc|ccc|ccc|ccc}
1 & 0 & 0 &1 & 0 & 0 &1 & 0 & 0 &1 & 0 & 0 &1 & 0 & 0 &1 & 0 & 0  \\ 
0 & 1 & 0 &0 & 1 & 0 &0 & 1 & 0 &0 & 1 & 0 &0 & 1 & 0 &0 & 1 & 0  \\ 
0 & 0 & 1 &0 & 0 & 1&0 & 0 & 1 &0 & 0 & 1 &0 & 0 & 1 &0 & 0 & 1  \\  \hline
\lambda_1 & 0 & 0 & \lambda_2 & 0 & 0 &\lambda_3 & 0 & 0 &\lambda_4 & 0 & 0 &\lambda_5 & 0 & 0 &\lambda_6 & 0 & 0  \\ 
0 & \lambda_1 & 0 & 0 & \lambda_2 & 0 & 0 &\lambda_3 & 0 & 0 &\lambda_4 & 0 & 0 &\lambda_5 & 0 & 0 &\lambda_6 & 0  \\  
0 & 0 & \lambda_1 & 0 & 0 & \lambda_2 & 0 & 0 &\lambda_3 & 0 & 0 &\lambda_4 & 0 & 0 &\lambda_5 & 0 & 0 &\lambda_6 \\  \hline
\lambda^2_1 & 0 & 0 & \lambda^2_2 & 0 & 0 &\lambda^2_3 & 0 & 0 &\lambda^2_4 & 0 & 0 &\lambda^2_5 & 0 & 0 &\lambda^2_6 & 0 & 0  \\ 
0& \lambda^2_1 & 0 & 0 & \lambda^2_2 & 0 & 0 &\lambda^2_3 & 0 & 0 &\lambda^2_4 & 0 & 0 &\lambda^2_5 & 0 & 0 &\lambda^2_6 & 0 \\ 
0 & 0 & \lambda^2_1 & 0 & 0 & \lambda^2_2 & 0 & 0 &\lambda^2_3 & 0 & 0 &\lambda^2_4 & 0 & 0 &\lambda^2_5 & 0 & 0 &\lambda^2_6  \\
\end{array} \right)  \nonumber
\end{align} 
\begin{align}
\tiny
\Em^{\rho} &= \left(\begin{array}{ccc|ccc|ccc|ccc|ccc|ccc}
0 & 0 & 0 &0 & 0 & 0 & 0 & 0 & 0 & 0 & 0 & 0 & 0 & 0 & 0 &0 & 0 & 0  \\ 
0 & 0 & 0 &0 & 0 & 0 &0 & 0 & 0 &0 & 0 & 0 &0 & 0 & 0 &0 & 0 & 0  \\ 
0 & 0 & 0 &0 & 0 & 0&0 & 0 & 0 &0 & 0 & 0 &0 & 0 & 0 &0 & 0 & 0  \\  \hline
0 & {\color{red} \rho} & 0 & 0 & {\color{red} \rho} & 0 & 0 & 0 & 0 & 0 & 0 & 0 &0 & 0 & 0 & 0 & 0 & 0  \\ 
0 & 0 & 0 & 0 & 0 & 0 & 0 & 0 & {\color{red} \rho} & 0 & 0 & {\color{red} \rho} & 0 & 0 & 0 & 0 & 0 & 0  \\  
0 & 0 & 0 & 0 & 0 & 0 & 0 & 0 & 0 & 0 & 0 & 0 & {\color{red} \rho} & 0 & 0 & {\color{red} \rho} & 0 & 0 \\  \hline
0 & 0 & {\color{red} \rho} & 0 & 0 & {\color{red} \rho} & 0 & 0 & 0 & 0 & 0 & 0 & 0 & 0 & 0 & 0 & 0 & 0  \\ 
0& 0 & 0 & 0 & 0 & 0 & {\color{red} \rho} & 0 & 0 & {\color{red} \rho} & 0 & 0 & 0 & 0 & 0 & 0 & 0 & 0 \\ 
0 & 0 & 0 & 0 & 0 & 0 & 0 & 0 & 0 & 0 & 0 & 0 & 0 & {\color{red} \rho} & 0 & 0 & {\color{red} \rho} & 0 \\ 
\end{array} \right)  \nonumber
\end{align}
\caption{Illustration of matrices $\Hm$ and $\Em^{\rho}$ in Example~\ref{ex:ex1}.} 
\label{fig:PHE}
\end{figure}

\subsection{Exact repair of a code block}
\label{subsec:repair2}

Let $(u^{\ast}, v^{\ast}) \in [r] \times [s]$ be the tuple associated with the code block to be repaired. Note that we need to reconstruct the $r$ symbols
$\big\{c(1;(u^{\ast}, v^{\ast})), c(2;(u^{\ast}, v^{\ast})),\ldots, c(r; (u^{\ast}, v^{\ast}))\big\}.$
We divide the repair process in the following two stages.
\begin{enumerate}
\item First, we recover the symbol $c(u^{\ast}; (u^{\ast}, v^{\ast}))$ using the {\rm Type~I} constraint containing it (cf.~\eqref{eq:type1}), i.e., 
\begin{align}
\label{eq:type1a2}
\sum_{(u,v) \in [r]\times [s]}c(u^{\ast}; (u,v)) = 0.
\end{align}
We download the $n-1$ symbols $$\big\{c(u^{\ast}; (u, v))~:~(u, v) \in [r] \times [s]~\text{s.t.}~(u, v) \neq (u^{\ast}, v^{\ast})\big\}$$ from the remaining $d = n - 1$ code blocks in this stage.
\item Next, we sequentially recover the $r-1$ symbols $$\{c(1;(u^{\ast}, v^{\ast})),\ldots, c(u^{\ast}-1;(u^{\ast}, v^{\ast})), c(u^{\ast}+1;(u^{\ast}, v^{\ast})),\ldots, c(r; (u^{\ast}, v^{\ast}))\}$$ using the following $r-1$ {\rm Type~II} constraints (cf.~\eqref{eq:type2}).
\begin{align}
\label{eq:type2a2}
\underbrace{\sum_{({u}, {v}) \in [r]\times [s]}\lambda^p_{({u}-1)s + {v}} \cdot c(u^{\ast}; ({u},{v}))}_{\text{(a)}} + \underbrace{\sum_{{v} \in [s]}\rho \cdot c(\overline{u^{\ast}+p}; (u^{\ast}, {v}))}_{\text{(b)}} = 0~~\text{for}~p \in \{1,\ldots, r-1\}.
\end{align}
Note that the choice of {\rm Type~I} constraint used in the previous stage ensures that we now know all the values of the linear combinations in the part (a) of these {\rm Type~II} linear constraints. Now assuming that $p \in \{1,\ldots, r-1\}$ is such that $\overline{u^{\ast} + p} = \hat{u} \in [r] \backslash \{u^{\ast}\}$, by downloading the additional $s - 1 = \frac{n}{r} - 1$ symbols
$
\big\{c(\hat{u};(u^{\ast}, {v})~:~{v} \in [s]~\text{s.t.}~{v} \neq v^{\ast}\big\}
$
which appear in the part $(b)$ of the linear constraint associated with the underlying value of $p$, we can recover the desired symbol $c(\hat{u};(u^{\ast}, v^{\ast}))$. Thus, the entire second stage involves downloading the following number of symbols (in addition to the symbols downloaded in the first stage).
\begin{align}
(r-1)(s-1) = (r-1)\left({n}/{r} - 1\right) \leq r \left({n}/{r} - 1\right) = n - r \leq n - 1. \nonumber
\end{align}
\end{enumerate}
Note that the entire repair-by-transfer scheme described above downloads at most 
$
2(n-1) = 2\left(\frac{n-1}{n-k}\right)\cdot \ell
$
symbols (over $\B$), which is twice the cut-set bound (cf.~\eqref{eq:cut-set}). 

\subsection{MDS property of the proposed codes}
\label{subsec:mdsness2}

Next, we argue that the construction proposed in Section~\ref{subsec:construction2} gives us MDS vector codes. Let $n$ and $k$ be given system parameters. We show a way to choose the field $\B$ and assign a value to the indeterminate $\rho \in \B$~(cf.~\eqref{eq:type2}) so that the parity-check matrix $\Pm$ defining the obtained code corresponds to a parity-check matrix of an $[n, k\ell = k(n-k), d_{\min} = n - k + 1, \ell = n-k]_{\B}$ MDS vector code. This is equivalent to showing that for every $\Sc \subseteq [n]$ such that $|\Sc| = r = n-k$, the $r\ell \times r\ell$ sub-matrix $\Pm_{\Sc}$ (cf.~\eqref{eq:ParitySub}) is full rank. As illustrated in Example~\ref{ex:ex1}, the matrix $\Pm$ is obtained by perturbing a parity check matrix of an MDS vector code. In particular, we have 
\begin{align}
\label{eq:Pm_mds}
\Pm = \Hm + \Em^{\rho},
\end{align}
where $\Hm$ is a parity check matrix of an $[n, k\ell = k(n-k), d_{\min} = n - k + 1, \ell = n-k]_{\B}$ MDS vector code. Assuming that we can find an irreducible polynomial of large enough degree over a field $\Lf$ with $|\Lf| \geq n+1$, it follows from Proposition~\ref{prop:perturbedMDS} that one can select a field $\B$ and a non-zero element $\rho \in \B$ such that the matrix $\Pm$ corresponds to a parity-check matrix of an MDS vector code. 

\begin{proposition}
\label{prop:perturbedMDS}
Let $\Lf$ be a field with at least $n+1$ elements and $\{\lambda_i\}_{i \in [n]}$ (cf.~Section~\ref{subsec:construction2}) be $n$ distinct non-zero elements in the field $\Lf$. Assume that we can find an irreducible polynomial in $\Lf[X]$ of large enough degree. Then, one can construct a field $\B$ and select a non-zero element $\rho \in \B$ (cf.~\eqref{eq:type2}) such that the matrix $\Pm = \Hm + \Em^{\rho}$ (cf.~\eqref{eq:Pm_mds}) is a parity-check matrix of an $[n, k\ell, d_{\min} = n - k + 1, \ell]_{|\B|}$ MDS vector code.
\end{proposition}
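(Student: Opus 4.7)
The plan is to view $\rho$ as a formal indeterminate $X$ over $\Lf$ and observe that the MDS condition---namely, that $\Pm_{\Sc}$ is invertible for every $\Sc\subseteq [n]$ with $|\Sc|=r=n-k$---is equivalent to requiring that none of the polynomials $P_{\Sc}(X):=\det(\Hm_{\Sc}+X\cdot \Em^{X}_{\Sc}/X)\in\Lf[X]$ vanishes at $X=\rho$. (Here $\Em^{\rho}_{\Sc}$ depends linearly on $\rho$, so $\Em^{\rho}_{\Sc}=\rho\cdot Z_{\Sc}$ for a fixed $0/1$ matrix $Z_{\Sc}$, and $P_{\Sc}(X)=\det(\Hm_{\Sc}+X\cdot Z_{\Sc})$.) The proof reduces to two steps: (i) show that each $P_{\Sc}(X)$ is a nonzero polynomial in $\Lf[X]$; and (ii) pick $\rho\in\B^{\ast}$ in a sufficiently large extension $\B$ of $\Lf$ that avoids being a root of any of these finitely many polynomials.

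For step (i), the key observation is that $P_{\Sc}(0)=\det(\Hm_{\Sc})$, so it suffices to show $\Hm$ itself is a parity-check matrix of an MDS vector code. Inspecting the Type~I constraints and part $(a)$ of the Type~II constraints (which together define $\Hm$), one sees that for each coordinate $x\in [r]$ the corresponding $r$ rows of $\Hm$ only involve the entries $\{c(x;(u,v))\}_{(u,v)}$, with coefficients $\lambda^{p}_{(u-1)s+v}$ for $p\in\{0,1,\ldots,r-1\}$. Hence, after a row/column permutation, $\Hm$ is block-diagonal with $r$ identical blocks $V=[\lambda^{p}_{i}]_{p\in\{0,\ldots,r-1\},\,i\in[n]}$. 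For any $\Sc\subseteq[n]$ of size $r$ the restricted block $V_{\Sc}$ is an $r\times r$ Vandermonde matrix in the distinct nonzero elements $\{\lambda_{i}\}_{i\in\Sc}\subseteq\Lf$, hence invertible. Therefore $\det(\Hm_{\Sc})=(\det V_{\Sc})^{r}\neq 0$, and $P_{\Sc}(X)$ has nonzero constant term, so it is a nonzero polynomial in $\Lf[X]$.

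For step (ii), the polynomial $P(X):=\prod_{\Sc:|\Sc|=r}P_{\Sc}(X)$ is nonzero, and its degree is bounded by $\binom{n}{r}\cdot r\ell$, a quantity depending only on $n$ and $k$. Using the hypothesized irreducible polynomial $f(Y)\in\Lf[Y]$ of sufficiently large degree $m$, form the extension field $\B:=\Lf[Y]/(f(Y))$, whose size $|\Lf|^{m}$ can be made to exceed $\deg P +1$. Since $P$ has at most $\deg P$ roots in $\B$, the set $\B^{\ast}\setminus\{\rho:P(\rho)=0\}$ is nonempty, and any $\rho$ in this set yields $\det(\Pm_{\Sc})=P_{\Sc}(\rho)\neq 0$ for all relevant $\Sc$. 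Thus $\Pm=\Hm+\Em^{\rho}$ is a valid parity-check matrix of an MDS vector code.

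The main obstacle I expect is step (i), specifically the verification that $\Hm$ is MDS. This is not immediate from the presentation of $\Pm$, but once one isolates the block-diagonal structure induced by the Type~II coordinate index $x$ and recognizes the Vandermonde pattern in $\lambda_{i}$, it falls out cleanly; step (ii) is then a standard pigeonhole argument over a sufficiently large extension field.
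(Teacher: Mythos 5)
Your proof is correct, but it takes a genuinely different route from the paper's, and in fact it is essentially the argument the paper uses for its \emph{second} proposition (the random-perturbation/Schwartz–Zippel variant), rather than the proof of Proposition~\ref{prop:perturbedMDS} itself. The paper's own proof does not count roots. Instead it chooses $m_{\rho}(X)\in\Lf[X]$ irreducible of degree strictly greater than $(r-1)\ell$, sets $\B=\Lf(\rho)\cong\Lf[X]/\langle m_{\rho}(X)\rangle$, and takes $\rho$ to be the adjoined root itself. Since each $f_{\Sc}(X)=\det(\Hm_{\Sc}+\Em^{X}_{\Sc})$ is a nonzero polynomial in $\Lf[X]$ of degree at most $(r-1)\ell$ --- the first $\ell$ (Type~I) rows contain no $X$, so at most $(r-1)\ell$ rows can contribute a factor of $X$ to any permutation term --- and $m_{\rho}$ is the minimal polynomial of $\rho$ over $\Lf$, it is automatic that $f_{\Sc}(\rho)\neq 0$: a nonzero polynomial of degree below $\deg m_{\rho}$ cannot vanish at $\rho$. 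This yields a canonical, fully explicit choice of $\rho$ with no search or counting, at the cost of requiring $|\B|>|\Lf|^{(r-1)\ell}$. Your pigeonhole argument, by contrast, proves only that a good $\rho$ exists in a slightly smaller field $|\B|\gtrsim\binom{n}{r}r\ell$, and you also use the looser per-$\Sc$ degree bound $r\ell$ rather than $(r-1)\ell$; neither affects correctness but the paper explicitly contrasts these two regimes (explicit vs.\ existential) in Section~\ref{subsec:mdsness2}. One thing you do more carefully than the paper: you actually prove that $\Hm$ is MDS via the block-diagonal Vandermonde decomposition, whereas the paper simply asserts this. (Minor notational point: the number of identical Vandermonde blocks is $\ell$, so the determinant is $(\det V_{\Sc})^{\ell}$; this coincides with your $(\det V_{\Sc})^{r}$ only because $\ell=r$ in the $t=1$ construction.)
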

\begin{proof}
See Appendix~\ref{appen:perturbedMDS}.
\end{proof}

The choice of $\rho$ and $\B$ presented above gives us a fully explicit construction of the exact-repairable MDS vector codes (cf.~Appendix~\ref{appen:perturbedMDS}). However, this approach requires the size of the field $\B$ to be quite large. In particular, we need to have $|\B| \gg n^{(r - 1)\ell}$. Next, we highlight another approach which ensures the existence of a choice for $\rho$ such that the code obtained from the proposed construction (cf.~Section~\ref{subsec:construction2}) is an MDS vector code.

\subsubsection{Random perturbations of a parity-check matrix of an MDS vector code}

Note that the matrix $\Hm$ (cf.~\eqref{eq:Pm_mds}) is a parity check matrix of an $[n, k\ell, d_{\min} = n - k + 1, \ell = n-k]_{\B}$ code. If we randomly assign $\rho$ to a non-zero element in $\B$, the parity check matrix $\Pm$ of the obtained code is a random perturbation of the matrix $\Hm$. Assuming that the field $\B$ has large enough size, it follows from Proposition~\ref{prop:perturbedMDS_random} presented below that there exists a choice for the indeterminate $\rho$ such that the matrix $\Pm$ corresponds to a parity-check matrix of an MDS vector code. This approach requires $|\B| \gg n^{r}r\ell$. We note that even though this alternative approach requires a slightly smaller field, it does not give us a fully explicit construction. Here, we also point out that Proposition~\ref{prop:perturbedMDS_random} follows from the analysis presented in \cite{SAK15}.
\begin{proposition}
\label{prop:perturbedMDS_random}
Assume that $\B$ is a field of large enough size and $\Hm \in \B^{r\ell \times n\ell}$ is a parity-check matrix of an $[n, k\ell, d_{\min} = n - k + 1, \ell]_{|\B|}$ MDS vector code. Let $\Em^{\rho}$ be a random $r\ell \times n\ell$ matrix with all of its non-zero entries equal to an element $\rho$ which is selected uniformly at random from the non-zero elements in $\B$. Then, the probability that the matrix $\Pm = \Hm + \Em^{\rho}$ is a parity-check matrix of an $[n, k\ell, d_{\min} = n - k + 1, \ell]_{|\B|}$ MDS vector code is bounded away from zero.
\end{proposition}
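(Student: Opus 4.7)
The plan is to view the MDS property as a polynomial non-vanishing condition in the single variable $\rho$ and apply Schwartz--Zippel together with a union bound. Recall that $\Pm$ defines an MDS vector code precisely when, for every $\Sc \subseteq [n]$ with $|\Sc|=r=n-k$, the $r\ell \times r\ell$ submatrix $\Pm_\Sc = \Hm_\Sc + \Em^\rho_\Sc$ is nonsingular. For each such $\Sc$, define
\[
f_\Sc(X) \;:=\; \det\bigl(\Hm_\Sc + \Em^{X}_\Sc\bigr) \;\in\; \B[X],
\]
where $\Em^X_\Sc$ is the matrix obtained by replacing every occurrence of $\rho$ by the indeterminate $X$. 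Since the only $X$-dependent entries of $\Em^X_\Sc$ are individual entries equal to $X$, $f_\Sc(X)$ is a polynomial of degree at most $r\ell$.

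The first key step is to verify that $f_\Sc$ is not the zero polynomial. Evaluating at $X=0$ kills all perturbations and leaves $f_\Sc(0) = \det(\Hm_\Sc)$. By hypothesis $\Hm$ is a parity-check matrix of an $[n,k\ell,n-k+1,\ell]_{|\B|}$ MDS vector code, so every $r\ell \times r\ell$ block-column submatrix $\Hm_\Sc$ is full rank, and hence $f_\Sc(0)\neq 0$. Thus $f_\Sc$ is a nonzero univariate polynomial of degree at most $r\ell$.

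Next, I would combine Schwartz--Zippel with a union bound. For $\rho$ chosen uniformly from $\B\setminus\{0\}$, the probability that $f_\Sc(\rho)=0$ is at most $r\ell/(|\B|-1)$. Taking a union bound over all $\binom{n}{r}\le n^r$ choices of $\Sc$, the probability that $\Pm$ fails to define an MDS vector code is at most
\[
\binom{n}{r}\cdot\frac{r\ell}{|\B|-1} \;\le\; \frac{n^r\, r\ell}{|\B|-1}.
\]
Provided $|\B|\gg n^r r\ell$, this quantity is strictly less than $1$ (indeed goes to $0$), so the complementary probability of obtaining a valid MDS parity-check matrix is bounded away from zero, which is exactly the conclusion of the proposition.

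The main obstacle could have been establishing non-degeneracy of $f_\Sc$, but here it dissolves immediately because $\rho=0$ recovers the known MDS matrix $\Hm$; the only care needed is to keep track that $\rho$ is drawn from $\B\setminus\{0\}$ rather than all of $\B$, which only changes the Schwartz--Zippel denominator from $|\B|$ to $|\B|-1$ and is absorbed in the ``large enough'' field size. No explicit construction of $\rho$ is produced this way, which is consistent with the remark in the paper that this probabilistic approach is non-constructive compared to the earlier Proposition~\ref{prop:perturbedMDS}.
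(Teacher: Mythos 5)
Your proof is correct and follows essentially the same argument as the paper: expand $\det(\Hm_\Sc + \Em^X_\Sc)$ as a nonzero polynomial in $X$ of degree at most $r\ell$ (nonzero because $X=0$ recovers the MDS submatrix $\Hm_\Sc$), then bound the probability of a bad $\rho$ by the number of roots over the field size. The only cosmetic difference is that you apply the root-counting bound to each $f_\Sc$ and take a union bound over subsets $\Sc$, whereas the paper forms the product polynomial $g(\nu)=\prod_\Sc f_\Sc(\nu)$ of degree at most $\binom{n}{r}r\ell$ and applies the bound once; both yield the identical estimate $\binom{n}{r}r\ell/(|\B|-1)$.
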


\begin{proof}
See Appendix~\ref{appen:perturbedMDS_random}.
\end{proof}


\section{Construction of $(1 + \frac{1}{t}, (n-k)^t, n - 1)$-exact-repairable MDS vector codes}
\label{sec:RB_t}

In this section, we generalize the construction presented in Section~\ref{sec:twiceRB}. A design parameter $t$ allows us to increase the sub-packetization level $\ell$ in order to decrease the repair bandwidth of the code. In particular, for the integer $1 \leq t \leq \ceilb{{n}/{r}} - 1 = \ceilb{{n}/{(n-k)}} - 1$, we design exact-repairable MDS vector codes with sub-packetization level $\ell = r^t = (n-k)^t$, $d = n - 1$ and repair bandwidth at most 
$$
\left(1 + \frac{1}{t}\right)\left(\frac{n-1}{n - k}\right)\cdot \ell~~\text{symbols (over $\B$)}.
$$ 
This repair bandwidth is at most $\big(1 + \frac{1}{t}\big)$ times the cut-set bound (cf.~\eqref{eq:cut-set}). Recall that, for an integer $a > 0$, we use $[a]$ to denote the set $\{1,2,\ldots, a\}$. 

\subsection{Code construction}
\label{subsec:construction_t}
Similar to Section~\ref{subsec:construction2}, for ease of exposition, we assume that $r | n$ and $n = sr = s(n-k)$. We partition the $n$ code blocks in $r = n-k$ groups of equal sizes with each group containing $s = \frac{n}{r} = \frac{n}{n-k}$ code blocks. Using this partition, we index each code block by a tuple $(u, v)$ where $u \in [r] = [n-k]$ and $v \in [s]$. In particular, for $i \in [n]$ the associated tuple $(u, v)$ satisfies
$$
i = (u-1)s + v.
$$
Furthermore, we index the $r^{t} = (n-k)^t$ symbols (over $\B$) in each code block by the $r^t$ distinct $t$-length vectors in $[r]^t = [n-k]^t$. For $(u, v) \in [r] \times [s]$, the $\big((u-1)s + v\big)$-th code block can be represented as follows.
$$
\cv_{(u-1)s + v} = \cv_{(u, v)} = \big\{c\big((x_1, x_2,\ldots, x_t); (u, v)\big) \big\}_{(x_1, x_2,\ldots, x_t) \in [r]^t}.
$$
Let $\{\lambda_i\}_{i \in [n]}$ be $n$ distinct non-zero elements of $\B$. We are now ready to present our construction of an $\big(1 + \frac{1}{t}, \ell = r^t, d = n - 1\big)$-exact-repairable MDS vector code $\Cc$ by defining an $r\ell \times n\ell$ parity-check matrix $\Pm$ of the code $\Cc$. Specifically, we classify the $r\ell = r^{t+1}$ linear constraints defined by the parity-check matrix $\Pm$ into two types. 
\begin{itemize}
\item {\bf {\rm Type~I} constraints:}~We have $\ell = r^t$ {\rm Type~I} constraints which are defined by the first $\ell = r^t$ rows of the matrix $\Pm$. For every $(x_1, x_2,\ldots, x_t) \in [r]^t$, we have
\begin{align}
\label{eq:type1t}
\sum_{(u,v) \in [r]\times [s]}c\big((x_1, x_2,\ldots, x_t); (u, v)\big) = 0.
\end{align}
\item {\bf {\rm Type~II} constraints:}~Let $\rho$ be an indeterminate which we specify later. We have $(r-1)\ell = (r-1)r^t$ {\rm Type~II} constraints. Recall that for strictly positive integers $l$ and $m$, the quantity $\overline{l}^{\{m\}}$ is defined as follows.
\begin{align}
\overline{l}^{\{m\}} = \begin{cases}
m &\mbox{if}~l~({\rm mod}~m) = 0, \\
l~({\rm mod}~m) &\mbox{otherwise}.
\end{cases}
\end{align} 
Assuming that $v \in [s] = \left[\frac{n}{r}\right]$ be such that $\overline{v}^{\{t\}} = a \in [t]$ and $p \in \{1, 2,\ldots, r-1\}$, we use $\overline{(x_1, x_2,\ldots, x_t)}^{v, p}$ to denote the vector obtained by modifying a single coordinate of the vector $(x_1, x_2,\ldots, x_t)$ in the following manner. 
\begin{align}
\overline{(x_1, x_2,\ldots, x_t)}^{v, p} &= (x_1, x_2,\ldots, x_{a-1}, \overline{x_{\overline{v}^{\{t\}}} + p}^{\{r\}},x_{a+1},\ldots, x_t) \nonumber \\
&= (x_1, x_2,\ldots, x_{a-1}, \overline{x_{a} + p}^{\{r\}},x_{a+1},\ldots, x_t). \nonumber 
\end{align}
For every $p \in \{1,\ldots, r - 1\}$ and $(x_1, x_2,\ldots, x_t) \in [r]^t$, we have an associated linear constraint in the parity-check matrix $\Pm$.
\begin{align}
\label{eq:type2t}
&\underbrace{\sum_{(u, v) \in [r]\times [s]}\lambda^p_{(u-1)s + v} \cdot c\big((x_1, x_2,\ldots, x_t); (u, v)\big)}_{\text{(a)}} \nonumber \\
&+ \underbrace{\sum_{v \in [s]}\rho \cdot c\big(\overline{(x_1, x_2,\ldots, x_t)}^{v, p}; (x_{\overline{v}^{\{t\}}}, v)\big)}_{\text{(b)}} = 0.
\end{align}
We can partition the {\rm Type~II} constraints (cf.~\eqref{eq:type2t}) into $(r - 1)$ groups (each group containing $\ell = r^t$ linear constraints) according to the value of $p \in \{1,\ldots, r-1\}$. In particular, $r^t$ constraints associated with the same value of $p$ constitute those $r^t$ rows of the parity-check matrix $\Pm$ which are indexed by the set $$\{pr^t + 1,\ldots, (p+1)r^t\} \subseteq [r\ell] = [r^{t+1}].$$
\end{itemize}

\begin{example}
\label{ex:type2t}
In this example, we look at the composition of a {\rm Type~II} constraint (cf.~\ref{eq:type2t}) when $t=2$. We assume that $n = 9$ and $r = n - k = 3$. This implies that $s = \frac{n}{n-k} = 3$. For $(x_1, x_2) \in [r]^t =  [3]^2$ and $p = 1$ the associated {\rm Type~II} constraint takes the following form.
\begin{align}
\label{eq:type2t_ex}
&~~~~~~~~~~~~~~~~~~~\underbrace{\sum_{(u, v) \in [3]\times [3]}\lambda_{(u-1)3 + v} \cdot c\big((x_1, x_2); (u, v)\big)}_{\text{(a)}}~~~~~~+ \nonumber \\
& \underbrace{\rho \cdot c\big((\overline{x_1 + 1}^{\{3\}}, x_2); (x_1, 1)\big) + \rho \cdot c\big((x_1, \overline{x_2 + 1}^{\{3\}}); (x_2, 2)\big) + \rho \cdot c\big((\overline{x_1 + 1}^{\{3\}}, x_2); (x_1, 3)\big)}_{\text{(b)}} = 0.
\end{align}
Note that we have used the following equalities in \eqref{eq:type2t_ex} which hold for $t = 2$ and $s = \frac{n}{n-k} = 3$.
\begin{align}
\overline{1}^{\{t = 2\}} = \overline{3}^{\{2\}} = 1~~\text{and}~~\overline{2}^{\{2\}} = 2.
\end{align}
\end{example}

\subsection{Exact repair of failed code blocks in the proposed codes}
\label{subsec:repair_t}

We now illustrate a mechanism to perform exact repair of a code block in the code obtained by the construction proposed in Section~\ref{subsec:construction_t}. Let $(u^{\ast}, v^{\ast}) \in [r] \times [s]$ be the tuple associated with the code block to be repaired. Note that we need to reconstruct the following $r^t$ code symbols.
\begin{align}
\big\{c\big((x_1, x_2,\ldots, x_t);(u^{\ast}, v^{\ast})\big)\big\}_{(x_1, x_2,\ldots, x_t) \in [r]^t}.
\end{align}
Similar to Section~\ref{subsec:repair2}, we divide the repair process in the following two stages.
\begin{enumerate}
\item In the first stage we utilize {\rm Type~I} constraints (cf.~\eqref{eq:type1t}) to recover the following $r^{t-1}$ symbols.
\begin{align}
\label{eq:stage1t}
\big\{c\big((x_1,\ldots,x_{a-1}, x_{a} = u^{\ast}, x_{a+1},\ldots, x_t);(u^{\ast}, v^{\ast})\big)\big\}_{(x_1,\ldots, x_{a-1}, x_{a+1},\ldots, x_t) \in [r]^{t-1}},
\end{align} 
where $a = \overline{v^{\ast}}^{\{t\}}$. Recall that. for $(x_1, x_2,\ldots, x_{t}) \in [r]^t$, the {\rm Type~I} constraint takes the following form.
\begin{align}
\label{eq:type1a}
\sum_{(u,v) \in [r]\times [s]}c\big((x_1, x_2,\ldots, x_t)); (u,v)\big) = 0.
\end{align}
Therefore, in order to recover the $r^{t-1}$ symbols shown in \eqref{eq:stage1t} using these constraints, we download the following $(n-1)r^{t-1}$ symbols from the remaining $n-1$ code blocks. 
\begin{align}
\label{eq:repair_stage1}
&\big\{c\left((x_1,\ldots,x_{a-1}, x_{a} = u^{\ast}, x_{a+1},\ldots, x_t); ({u}, {v})\right)~:~(x_1,\ldots, x_{a-1}, x_{a+1},\ldots, x_t) \in [r]^{t-1}\nonumber \\
&~~~~~~~~~~~~~~~~~~~~~~~~~~~~~~~~~~~~~~~~~~~~~~~~~~~~~~~~~~~~~~~~~~~~~~~~~~~~~~~\text{and}~({u}, {v}) \in [r] \times [s]~\text{s.t.}~({u}, {v}) \neq (u^{\ast}, v^{\ast})\big\},
\end{align}
where $a = \overline{v^{\ast}}^{\{t\}}$.
\item Note that, at the end of the stage $1$ of the repair process, we have access to the following symbols (over $\B$) which also include the $r^{t-1}$ symbols recovered in the stage $1$. 
\begin{align}
\label{eq:repair_stage1}
&\big\{c\left((x_1,\ldots,x_{a-1}, x_{a} = u^{\ast}, x_{a+1},\ldots, x_t); (u, v)\right)~:~(x_1,\ldots, x_{a-1}, x_{a+1},\ldots, x_t) \in [r]^{t-1}\nonumber \\
&~~~~~~~~~~~~~~~~~~~~~~~~~~~~~~~~~~~~~~~~~~~~~~~~~~~~~~~~~~~~~~~~~~~~~~~~~~~~~~~~~~~~~~~~~~~~~~~~~~~~~~~~~~~~~~~~\text{and}~({u}, {v}) \in [r] \times [s]\big\}.
\end{align}
In the stage $2$ of the repair process, we employ the {\rm Type~II} constraints to sequentially recover the remaining $(r-1)r^{t-1}$ symbols
\begin{align}
\label{eq:stage2t}
\big\{c\big((x_1,\ldots,x_{a-1}, x_{a} \neq u^{\ast}, x_{a+1},\ldots, x_t);(u^{\ast}, v^{\ast})\big)\big\}_{(x_1,\ldots, x_{a-1}, x_{a+1},\ldots, x_t) \in [r]^{t-1}},
\end{align}
where $a = \overline{v^{\ast}}^{\{t\}}$. Let $p \in \{1,2,\ldots, r-1\}$ be such that we have $\overline{u^{\ast} + p}^{\{r\}} = \hat{u} \in [r]\backslash\{u^{\ast}\}$. We utilize the following {\rm Type~II} constraint to repair the desired symbol $c\big((x_1,\ldots, x_{a-1}, \hat{u}, x_{a+1},\ldots, x_t);(u^{\ast}, v^{\ast})\big)$.
\begin{align}
\label{eq:type2at}
&\underbrace{\sum_{({u}, {v}) \in [r]\times [s]}\lambda^p_{(u-1)s + v} \cdot c\big((x_1,\ldots, x_{a-1}, x_{a} = u^{\ast}, x_{a+1},\ldots, x_t); ({u}, {v})\big)}_{\text{(a)}} + \nonumber \\
& \underbrace{\rho \cdot c\big((x_1,\ldots, x_{a-1},\hat{u}, x_{a+1},\ldots, x_t); (x_{a} = u^{\ast}, v^{\ast})\big)}_{\text{(b-I)}} + \nonumber \\
& \underbrace{\sum_{v \in [s]~:~v \neq v^{\ast}}\rho \cdot c\big(\overline{(x_1,\ldots, x_{a} = u^{\ast},\ldots, x_t)}^{v, p}; (x_{\overline{v}^{\{t\}}}, v)\big)}_{\text{(b-II)}} = 0.
\end{align}
It is straightforward to verify that at the end of the stage $1$ of the repair process, we know the value of the linear combination in the part (a) of this linear constraint (cf.~\eqref{eq:repair_stage1}). We now argue that we also know many of the symbols appearing in the part (b-II) of this constraint. Note that the part (b-II) can be rewritten as follows.
\begin{align}
\label{eq:type2at_bII}
&\sum_{v \in [s]~:~v \neq v^{\ast}}\rho \cdot c\big(\overline{(x_1,\ldots, x_{a} = u^{\ast},\ldots, x_t)}^{v, p}; (x_{\overline{v}^{\{t\}}}, v)\big) \nonumber \\
& = \underbrace{\sum_{v \neq v^{\ast}~:~\overline{v}^{\{t\}} =\overline{v^{\ast}}^{\{t\}} = a}\rho \cdot c\big((x_1,\ldots,x_{a-1},x_{a} =  \hat{u}, x_{a+1},\ldots, x_t); (x_{a} = u^{\ast}, v)\big)}_{\text{(b-II-1)}} \nonumber \\
&~~~~~+ \underbrace{\sum_{v \neq v^{\ast}~:~\overline{v}^{\{t\}} \neq \overline{v^{\ast}}^{\{t\}} = a}\rho \cdot c\big(\overline{(x_1,\ldots, x_{a} = u^{\ast},\ldots, x_t)}^{v, p}; (x_{\overline{v}^{\{t\}}}, v)\big)}_{\text{(b-II-2)}}.
\end{align}
Note that the code symbols appearing in part (b-II-2) are indexed by the vectors which have their $a$-th coordinate equal to $u^{\ast}$. One can verify that these symbols are already known at the end of the stage $1$ of the repair process (cf.~\eqref{eq:repair_stage1}). Therefore, in order to recover the desired symbol $$c\big((x_1,\ldots, x_{a-1}, \hat{u}, x_{a+1},\ldots, x_t);(u^{\ast}, v^{\ast})\big)$$ using the linear constraint in \eqref{eq:type2at}, we need to only download the code symbols appearing in the part (b-II-1). Note that there are at most $\floorb{\frac{s}{t}}$ symbols in the part (b-II-1). Since we have to repair $(r-1)r^{t-1}$ symbols in the stage $2$ (cf.~\eqref{eq:stage2t}), the number of symbols that we download in the stage $2$ (in addition to the symbol downloaded in the stage $1$) is at most
\begin{align}
(r-1)r^{t-1}\floorb{\frac{s}{t}} \leq (r-1)r^{t-1}\left({\frac{s}{t}}\right) \nonumber \\
= \frac{r^{t-1}}{t}\frac{r-1}{r}n \nonumber \\
\overset{(i)}{\leq}\frac{r^{t-1}}{t}(n-1).
\end{align}
Here the step $(i)$ follows as, for $r = n- k \leq n$, we have $\frac{r-1}{r} \leq \frac{n-1}{n}$. Since we download $(n-1)r^{t-1}$ symbols during the stage $1$ of the repair process, the total repair bandwidth is at most 
\begin{align}
(n-1)r^{t-1} + \frac{r^{t-1}}{t}(n-1) = \left(1+ \frac{1}{t}\right)(n-1)r^{t-1} = \left(1+ \frac{1}{t}\right)\left(\frac{n-1}{n-k}\right)\cdot \ell,
\end{align}
which is $\left(1 + 1/t\right)$ times the cut-set bound (cf.~\eqref{eq:cut-set}).
\end{enumerate}

\subsection{MDS property of the proposed codes}
\label{subsec:mds_t}

The argument for this part is identical to that used in Section~\ref{subsec:mdsness2}.

\section{Conclusion and future directions}
\label{sec:conclusion}

We construct MDS vector codes that allow for exact repair of a code block by downloading near-optimal amount of data from the remaining code blocks. These codes are well suited for distributed storage systems as they work with small sub-packetization level and enable repair-by-transfer mechanisms, where repair of a code block (node) requires minimal computation at the contacted code blocks (nodes). We conclude by pointing out a few directions to extend this work.

\begin{itemize}
\item \textbf{Reducing the size of base field $\B$.} The exact-repairability and the corresponding repair bandwidth of the proposed codes only depend on the combinatorial structure, i.e., the locations of non-zero entries, of the designed parity-check matrix. However, the argument which establishes the MDS property for these codes requires the size of the base field $\B$ to be quite large. We note that the similar issue arises in many previous works, e.g.,~\cite{zigzag13, SAK15}. Recently, Ye and Barg have addressed this issue for the codes that operated exactly at the cut-set bound in \cite{YeB16a, YeB16b}. However, they again work with large sub-packetization level $n^{\ceilbb{\frac{n}{n-k}}}$. The reduction of the base field size for our construction is an interesting question, which has both theoretical and practical significance.
\item \textbf{Constructing codes for general values of $d$.} In this  paper we focus on the setting with $d = n - 1$. Extending the construction proposed in this paper for general value of $k < d < n - 1$ is another important direction to explore. Towards this, one relatively straightforward approach is to employ the ideas used in \cite{RKV16a} to extend the construction from \cite{SAK15} to general values of $d$. For an integer $t \geq 1$, this would provide exact-repairable codes with sub-packetization level $(d - k + 1)^t$ and small repair bandwidth. Moreover, the obtained codes would also have repair-by-transfer schemes. 
\item \textbf{Simultaneous repair of multiple code blocks.} The problem of designing MDS codes that allow for simultaneous repair of multiple code blocks has been addressed in several works, including \cite{ShumHu, Kermarrec:Repairing11, RKV16b, YeB16a}. Designing codes that provide mechanisms to perform simultaneous repair of multiple code blocks, and as well as a good trade-off between the sub-packetization level and repair bandwidth is an interesting direction to pursue.
\end{itemize}

\section*{Acknowledgement}
\label{sec:ack}
We would like to thank Itzhak Tamo for introducing us to the problem of constructing exact-repairable MDS codes with near-optimal repair bandwidth during the 2016 Information Theory and Applications (ITA) workshop. We are also grateful to him for commenting on an earlier version of this draft.


\bibliographystyle{plain}
\bibliography{RepairBW_MDS}


\appendix

\section{Necessary sub-packetization level for MDS vector codes}
\label{appen:appen_sub}

Assume that $\F \cong \B^{\ell}$. Let $\Cc \subseteq \F^{n}$ be an MDS vector code with the sub-packetization level $\ell$ where all contacted nodes contribute to the repair process. For a constant $b \geq 1$, let the repair bandwidth of $\Cc$ for exact repair of a single code block is less than $c$ times the cut-set bound, i.e.,
\begin{align}
\text{No. of symbols (over $\B$) downloaded from the contacted $d = n-1$ nodes} \leq  b\left(\frac{n - 1}{n - k}\right)\cdot \ell.
\end{align}
This implies that there exists at least one contacted node which contributes at most $\floorbb{\frac{b\ell}{n - k} }$ symbols (over $\B$) during the repair process. Moreover, each of the contacted $d = n - 1$ nodes sends at least $1$ symbol (over $\B$) during the repair process. Hence, we have that
\begin{align}
\floorb{\frac{b\ell}{n - k}}\geq 1.
\end{align}
This gives us that 
\begin{align}
\ell \geq \frac{n - k}{b}~~\text{or}~~\ell = \Omega(n-k).
\end{align}

\section{Proof of Proposition~\ref{prop:perturbedMDS}.}
\label{appen:perturbedMDS}

Let $\Lf$ be a finite field such that all the $n$ distinct non-zero elements $\{\lambda_i\}_{i \in [n]}$ used in the code construction (cf.~Section~\ref{subsec:construction2}) belong to $\Lf$. Furthermore, let $\rho \notin \Lf$ be an element from an extension of $\Lf$ such that its minimal polynomial $m_{\rho}(x) \in \Lf[X]$ has its degree ${\rm deg}(m_{\rho})$ strictly greater than $(r - 1)\ell$~\cite{Lidl}. We take $\B$ to be $\Lf(\rho)$, the simple extension of the field $\Lf$ to include $\rho$. Recall that $\B = \Lf(\rho) \cong \Lf[X]/\langle m_{\rho}(X)\rangle$, where $\langle m_{\rho}(X)\rangle \subset \Lf[X]$ is the ideal generated by the minimal polynomial $m_{\rho}(X)$. Moreover, $|\B| = |\Lf|^{{\rm deg}(m_{\rho})}$.

Now, we argue that for such a choice of $\rho$ and the associated field $\B$, the parity check matric $\Pm$ (cf.~\eqref{eq:Pm_mds}) define an $[n, k\ell, d_{\min} = n - k + 1, \ell]_{\B}$ MDS vector code, i.e., we have 
\begin{align}
\label{eq:Pm_Sn}
\det(\Pm_{\Sc}) = \det(\Hm_{\Sc} + \Em^{\rho}_{\Sc}) \neq 0~~\forall~~\Sc \subseteq [n]~\text{such that}~|\Sc| = r = n-k.
\end{align}

Recall that $\Hm$ is an $r\ell \times n\ell$ parity-check matrix of an $[n, k\ell, d_{\min} = n - k + 1, \ell]_{|\B|}$ MDS vector code. Therefore, we have 
\begin{align}
\label{eq:HmSn}
\det\big(\Hm_{\Sc}\big) \neq 0~~~\forall~\Sc \subseteq [n]~\text{such that}~|\Sc| = r.
\end{align}

Consider a set $\Sc \subseteq [n]$ such that $|\Sc| = r = n - k$ and the associated sub-matrix $\Hm_{\Sc} + \Em^{X}_{\Sc}$ (cf.~\eqref{eq:Pm_mds}), where $X$ is an indeterminate. Note that the determinant of this $r\ell \times r\ell$ matrix can be expressed as
\begin{align}
\label{eq:f_defn}
f_{\Sc}(X) = \det\big(\Hm_{\Sc} + \Em^{X}_{\Sc} \big),
\end{align}
where $f_{\Sc}(X) \in \Lf[X]$ is a polynomial of degree at most $(r-1)\ell$ and its coefficients are defined by the elements $\{\lambda_{i}\}_{i \in \Sc} \subseteq \Lf$. We now argue that the polynomial $f_{\Sc}(X)$ is a non-trivial (not an identically zero) polynomial. Towards this, we consider the value of the polynomial $f_{\Sc}(X)$ at $X = 0$.
\begin{align}
f_{\Sc}(0) &= \det\big(\Hm_{\Sc} + \Em^{0}_{\Sc}\big) \nonumber \\
&\overset{(i)}{=} \det\big(\Hm_{\Sc}\big)\nonumber \\
&\overset{(ii)}{\neq} 0.
\end{align}
Here the step $(i)$ holds as $\Em^{0}$ reduces to a zero matrix, and the step $(ii)$ follows from \eqref{eq:HmSn}. Since $f_{\Sc}(X)$ evaluates to a non-zero value at $X = 0$, it's a non-trivial polynomial. We now substitute $X = \rho$, which gives us the following (cf.~\ref{eq:f_defn}).
\begin{align}
f_{\Sc}(\rho) &= \det\big(\Hm_{\Sc} + \Em^{\rho}_{\Sc} \big) \nonumber \nonumber  \\
&\overset{(i)}{=} \det\big(\Pm_{\Sc}\big) \nonumber \\
&\overset{(ii)}{\neq} 0. 
\end{align}
Here, the step $(i)$ follows from the definition of $\Pm$ (cf.~\eqref{eq:Pm_mds} and \eqref{eq:Pm_Sn}). The step $(ii)$ follows as we have that the degree of $f_{\Sc}(X) \in \Lf[X]$ is strictly less than the degree of $m_{\rho}(X) \in \Lf[X]$, the minimal polynomial of $\rho$. Since the choice of $\Sc$ is arbitrary over all the subsets of $[n]$ of size $r = n-k$. We have that 
\begin{align}
f_{\Sc}(\rho) = \det\big(\Pm_{\Sc}\big) \neq 0~~\forall~\Sc \subseteq [n]~\text{such that}~|\Sc| = r.
\end{align}
This completes the proof. \qed

\section{Proof of Proposition~\ref{prop:perturbedMDS_random}.}
\label{appen:perturbedMDS_random}

\begin{proof}
Note that  $\Hm \in \B^{r\ell \times n \ell}$ is an $r\ell \times n\ell$ parity-check matrix of an $[n, k\ell, d_{\min} = n - k + 1, \ell]_{|\B|}$ MDS vector code. Therefore, we have 
\begin{align}
\label{eq:HmS}
\det\big(\Hm_{\Sc}\big) \neq 0~~~\forall~\Sc \subseteq [n]~\text{s.t.}~|\Sc| = r.
\end{align}

Consider the perturbed matrix $\Pm = \Hm + \Em^{\nu}$ where $\nu$ denotes an indeterminate. Let $\Sc \subseteq [n]$ such that $|\Sc| = r = n - k$ and $\Pm_{\Sc} = \Hm_{\Sc} + \Em^{\nu}_{\Sc}$ be the associated sub-matrix (cf.~\eqref{eq:ParitySub}). Let $f_{\Sc}(\nu)$ be the determinant of the $r\ell \times r\ell$ matrix $\Pm_{\Sc}$, i.e.,
\begin{align}
\label{eq:f_def}
f_{\Sc}(\nu) = \det\big(\Pm_{\Sc}\big).
\end{align}
Note that $f_{\Sc}(\nu)$ is a polynomial in the indeterminate $\nu$. Next, we argue that $f_{\Sc}(\nu)$ is a non-trivial (not an identically zero) polynomial. Let's consider the value of the polynomial $f_{\Sc}(\nu)$ at $\nu = 0$,
\begin{align}
f_{\Sc}(0) &= \det\big(\Hm_{\Sc} + \Em^{\nu = 0}_{\Sc}\big) \nonumber \\
&\overset{(i)}{=} \det\big(\Hm_{\Sc}\big)\nonumber \\
&\overset{(ii)}{\neq} 0.
\end{align}
Here the step $(i)$ holds as for $\nu = 0$, $\Em^{\nu}$ reduces to a zero matrix. The step $(ii)$ follows from \eqref{eq:HmS}. Since $f_{\Sc}(\nu)$ evaluates to a non-zero value at $\nu = 0$, it's a non-trivial polynomial. Note that this is true for any choice of the set $\Sc \subseteq [n]$ such that $|\Sc| = r = n - k$. We now consider the following polynomial.
\begin{align}
g(\nu) = \prod_{\Sc \subseteq [n]~:~|\Sc| = r} f_{\Sc}(\nu).
\end{align}
Note that $g(\nu)$ is a non-trivial polynomial as it is a product of the non-trivial polynomials $\big\{f_{\Sc}(\nu)\big\}_{\Sc \subseteq [n]~:~|\Sc| = r}$. Moreover, we have that
\begin{align}
{\rm degree}\big(g(\nu)\big) &= \sum_{\Sc \subseteq [n]~:~|\Sc| = r}{\rm degree}\big(f_{\Sc}(\nu)\big) \nonumber \\
&\overset{(i)}{\leq} {n \choose r}r\ell,
\end{align}
where $(i)$ follows from the fact that for every subset $\Sc$ the degree of the associated polynomial $f_{\Sc}(\nu)$ is at most $r\ell$. Now, if we substitute $\nu$ with $\rho$ which is selected uniformly at random from the non-zero elements in $\B^{\ast}$, then if follows from the Schwartz-Zippel lemma that we have 
\begin{align}
\label{eq:det_prob}
\pr{\text{$\Pm = \Hm + \Em^{\rho}$ is not a parity-check matrix of an MDS vector code}} &= \pr{g(a) = 0} \nonumber \\
&\leq \frac{{n \choose r}r\ell}{|\B| - 1}.
\end{align}
Note that, for $|\B|$ large enough (in particular $|\B| =  \Omega\left({n \choose r}r\ell\right)$), the right hand side of \eqref{eq:det_prob} is strictly smaller than $1$.

\end{proof}

\end{document}